\newcommand{\okra}[1]{\left( #1 \right)}
\newcommand{\kwad}[1]{\left[ #1 \right]}
\newcommand{\klam}[1]{\left\{ #1 \right\}}
\DeclareMathOperator{\sred}{\mathbf{E}}
\newcommand{\boole}[1]{{\bf 1}{\klam{#1}}}
\newtheorem{definition}{Definition}
\newtheorem{theorem}{Theorem}
\newtheorem{lemma}{Lemma}
\newenvironment*{proof}{\begin{trivlist}\item[]
\noindent\textbf{Proof:}}{$\Box$\par\end{trivlist}}
\newenvironment*{proof*}[1]{\begin{trivlist}\item[]
\noindent\textbf{Proof of #1:}}{$\Box$\par\end{trivlist}}
\author{{\L}ukasz D\k{e}bowski\thanks{%Manuscript received ... ; revised ... .
    % \newline\null\hspace{\parindent}
    % \thanks{
    {\L}. D\k{e}bowski is with
    the Institute of Computer Science, Polish Academy of Sciences, 
    ul. Jana Kazimierza 5, 01-248 Warszawa, Poland 
    (e-mail: ldebowsk@ipipan.waw.pl).}}
\title{A Preadapted Universal Switch Distribution for Testing
  Hilberg's Conjecture} \date{}
\begin{document}

\pagestyle{empty}   
\begin{titlepage}
%\titlepage
\maketitle

\begin{abstract}
  Hilberg's conjecture about natural language states that the mutual
  information between two adjacent long blocks of text grows like a
  power of the block length. The exponent in this statement can be
  upper bounded using the pointwise mutual information estimate
  computed for a carefully chosen code. The bound is the better, the
  lower the compression rate is but there is a requirement that the
  code be universal.  So as to improve a received upper bound for
  Hilberg's exponent, in this paper, we introduce two novel universal
  codes, called the plain switch distribution and the preadapted
  switch distribution. Generally speaking, switch distributions are
  certain mixtures of adaptive Markov chains of varying orders with
  some additional communication to avoid so called catch-up
  phenomenon. The advantage of these distributions is that they both
  achieve a low compression rate and are guaranteed to be
  universal. Using the switch distributions we obtain that a sample of
  a text in English is non-Markovian with Hilberg's exponent being
  $\le 0.83$, which improves over the previous bound $\le 0.94$
  obtained using the Lempel-Ziv code.
  \\[1em]
  \textbf{Keywords}:
% \end{abstract}
%
% \begin{keywords}
  universal coding, natural language, Hilberg's conjecture
  % , ergodic decomposition
% \end{keywords}
\end{abstract}

%MSC 2000: 
% 94A29 - Source coding
% 60G10 - Stationary processes
% 94A17 - Measures of information, entropy

\end{titlepage}
\pagestyle{plain}   

% \tableofcontents

\section{Introduction}

%\cite{Ryabko10,HindleRooth93,KollenRooij13}

Hilberg's conjecture is a hypothesis concerning natural language which
states that the mutual information between two adjacent long blocks of
text grows very fast, namely as a power of the block length
\cite{Hilberg90, EbelingNicolis91, EbelingNicolis92, EbelingPoschel94,
  BialekNemenmanTishby01, BialekNemenmanTishby01b,
  CrutchfieldFeldman03}. There are two important information-theoretic
results concerning this conjecture. On the one hand, Hilberg's
hypothesis can be linked with the idea that texts in natural language
refer to large amounts of randomly accessed knowledge in a repetitive
way \cite{Debowski11b,Debowski12}. On the other hand, Hilberg's
hypothesis can be linked with the fact that the number of distinct
words in a text grows as a power of the text length
\cite{Debowski06,Debowski11b}, the fact known as Herdan's or Heaps'
law \cite{Herdan64,Heaps78}. These two results make Hilberg's
conjecture interesting and worth direct empirical testing.

To present Hilberg's conjecture formally, let us introduce some
notations.  Consider a probability space $(\Omega,\mathcal{J},Q)$ with
$\Omega=\klam{1,2,...,D}^{\mathbb{Z}}$, random variables
$X_k:\Omega\ni(x_i)_{i\in\mathbb{Z}}\mapsto x_k\in\klam{1,2,...,D}$,
and distribution $Q$ which is stationary on $(X_i)_{i\in\mathbb{Z}}$
but not necessarily ergodic.  Blocks of symbols or variables are
denoted as $X_n^m=(X_i)_{n\le i\le m}$ with $X_n^m$ being the empty
block for $m<n$.  Moreover, for a random variable $X$ we introduce a
random variable $Q(X)$ which takes value $Q(X=x)$ for $X=x$. The
pointwise entropy of variable $X$ is the random variable
\begin{align}
  H^Q(X)=-\log Q(X)
\end{align}
whereas the pointwise mutual information between $X$ and $Y$ is
\begin{align}
  I^Q(X;Y)=-\log Q(X)-\log Q(Y)+\log Q(X,Y)
  .
\end{align}
Having this in mind, Hilberg's conjecture states that
\begin{align}
  \label{Hilberg}
  I^Q(X_1^n;X_{n+1}^{2n})&\propto n^\beta
  ,
  \quad
  \beta\in(0,1)
  .
\end{align}
Hilberg \cite{Hilberg90} supposed that $\beta\approx 0.5$ holds for
texts in English but his estimate was very rough, based on Shannon's
psycholinguistic experiment \cite{Shannon51}.

It is an interesting open question how much the exponent $\beta$
varies across different texts and whether it is possibly a
text-independent language universal. This question can be connected to
some fundamental limitations of human memory and
attention. Consequently, in this paper, we want to improve the method
of upper bounding Hilberg's exponent $\beta$ proposed in
\cite{Debowski13}, which is based on universal coding
\cite{ZivLempel77} or universal distributions \cite{Ryabko10}. The
focus of the present paper is to develop a better tool for estimating
mutual information than used so far and to demonstrate how it works
with an instance of empirical language data. Whereas we propose some
method of upper bounding exponent $\beta$ and showing that texts in
natural language are non-Markovian, it remains to find a computational
method of lower bounding Hilberg's exponent $\beta$.

Before we show how to upper bound exponent $\beta$ using a universal
distribution, it is important to note that the discussion of Hilberg's
hypothesis is intimately connected with the question whether the
natural language production is nonergodic and what its plausible
ergodic decomposition is.  According to the ergodic decomposition
theorems \cite{GrayDavisson74b,Kallenberg97,Debowski11b}, any
stationary measure $Q$ equals the expectation $\sred_Q F$, where
$F=Q(\cdot|\mathcal{I})$ is the random ergodic measure for measure $Q$
and $\mathcal{I}$ is the shift-invariant algebra.  There exist some
stationary nonergodic measures $Q$, called Santa Fe processes, for
which mutual information $I^Q(X_1^n;X_{n+1}^{2n})$ grows according to
a power law but the main contribution of the mutual information comes
from identifying the random ergodic measure $F$ given the block
$X_1^n$ \cite{Debowski11b,Debowski12}. In that case mutual information
$I^F(X_1^n;X_{n+1}^{2n})$ for the random ergodic measure $F$ itself is
negligibly small. The Santa Fe processes are not irrelevant for our
discussion. In their original construction they were intended as some
idealized models for the transmission of knowledge in natural
language. Thus when estimating the exponent in Hilberg's conjecture we
have first to decide whether we do it for measure $Q$ or for measure
$F$. The methods for estimating $I^Q(X_1^n;X_{n+1}^{2n})$ and
$I^F(X_1^n;X_{n+1}^{2n})$ are very different.  We suppose that
estimating the mutual information for the nonergodic measure $Q$ is
closer to the original intention of Hilberg although some
philosophical problem remains `how many' ergodic components we
actually admit (e.g.\ do we assume that $Q$ models texts in a given
register of a particular language or in any register of any natural
language).  In contrast to the random ergodic measure $F$, the
possibly nonergodic measure $Q$ is not identifiable given a single
realization $(X_i)_{i\in\mathbb{Z}}$. Despite that, it is somewhat
baffling that some nontrivial upper bound for Hilberg's exponent
$\beta$, a~property of measure $Q$, can be learned from a single
realization $(X_i)_{i\in\mathbb{Z}}$ if the growth of mutual
information is uniform in $Q$.

As we have indicated, our method of upper bounding Hilberg's exponent
$\beta$ is based on universal coding.  Here we say that a~distribution
$P$ is weakly universal if for every stationary distribution $Q$ we
have
\begin{align}
  \label{WeaklyUniversal}
  \lim_{n\rightarrow\infty} \frac{1}{n} \sred_Q H^P(X_1^n)
  = h_Q
  ,
\end{align}
where the entropy rate $h_Q$ is
\begin{align}
  \label{EntropyRate}
  h_Q
  :=
  \lim_{n\rightarrow\infty} \frac{1}{n} \sred_Q H^Q(X_1^n)
  =
  \inf_{k\in\mathbb{N}} \sred_Q \kwad{-\log Q(X_{k+1}|X_1^k)}
  .
\end{align}
On the other hand, the distribution $P$ is called strongly universal
if for every stationary ergodic distribution $Q$ we have $Q$-almost
surely
\begin{align}
  \label{StronglyUniversal}
  \limsup_{n\rightarrow\infty} \frac{1}{n} H^P(X_1^n)
  \le
  h_Q
  .
\end{align}
Strongly universal distributions are weakly universal under mild
conditions \cite{Weissman05}.
Dębowski \cite{Debowski13} proposed to investigate the empirical law
of form
\begin{align}
  \label{HilbergP}
  I^P(X_1^n;X_{n+1}^{2n})&\propto n^{\gamma}
  ,
  \quad
  \gamma\in(0,1)
  ,
\end{align}
where $P$ is a strongly universal distribution. Relationship
(\ref{HilbergP}), which can be called the codewise Hilberg conjecture,
has been checked experimentally for the Lempel-Ziv code on a sample of
10 texts in English and it holds surprisingly uniformly with
$\gamma=0.94$ \cite{Debowski13}. The same estimate $\gamma=0.94$ has
been obtained for 21 other texts in German and French (work under
review).

Are laws (\ref{Hilberg}) and (\ref{HilbergP}) related?  In fact, if
they hold uniformly for large $n$ then exponents $\beta$ and $\gamma$
can be linked.  To see it, the following lemma is helpful.
\begin{lemma}[\mbox{\cite{Debowski11b}}]
\label{theoExcessBound}
Consider a~function $G:\mathbb{N}\rightarrow\mathbb{R}$ such that
$\lim_k G(k)/k=0$ and $G(n)\ge 0$ for all but finitely many $n$. For
infinitely many $n$, we have $2G(n)- G(2n)\ge 0$.
\end{lemma}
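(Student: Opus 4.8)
The plan is to argue by contradiction. Suppose the conclusion fails, i.e.\ that $2G(n)-G(2n)\ge 0$ holds for only finitely many $n$. Then there is an index $N_1$ with $G(2n)>2G(n)$ for every $n\ge N_1$. Using in addition the hypothesis that $G(n)\ge 0$ for all but finitely many $n$, fix $N\ge N_1$ large enough that both $G(n)\ge 0$ and $G(2n)>2G(n)$ hold for all $n\ge N$.

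The key step is to iterate the strict inequality along a geometric subsequence. Starting from any $m\ge N$, a trivial induction on $k$ gives $G(2^k m)>2^k G(m)$ for all $k\ge 0$, since each doubling of the argument more than doubles the value and $2^k m\ge N$ at every stage. Dividing by $2^k m$ yields $G(2^k m)/(2^k m)>G(m)/m$. If we can pick some $m\ge N$ with $G(m)>0$, the right-hand side is a fixed positive constant while $2^k m\to\infty$, contradicting the assumption $\lim_{j\to\infty}G(j)/j=0$.

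The only remaining subtlety is the degenerate case $G(m)=0$ for every $m\ge N$, which is not excluded by the hypothesis (it gives $G\ge 0$, not strict positivity). But this case is self-defeating: taking $m=N$ in the strict inequality gives $G(2N)>2G(N)=0$, whereas $2N\ge N$ forces $G(2N)=0$, a contradiction. Hence one of the two cases always occurs, and each leads to a contradiction, proving the lemma. I expect no genuine obstacle beyond bookkeeping: merging the two ``for all but finitely many'' thresholds into a single $N$, and explicitly disposing of the $G\equiv 0$ corner case so that the geometric iteration has a positive base to work from.
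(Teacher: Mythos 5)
Your argument is correct and follows the natural route for this kind of doubling lemma: assume $G(2n)>2G(n)$ eventually, iterate to get $G(2^k m)>2^k G(m)$, divide by $2^k m$ to contradict the sublinearity $\lim_j G(j)/j=0$, and dispose separately of the case where $G$ vanishes from some point on. The paper itself only cites \cite{Debowski11b} and gives no proof, but the standard proof there proceeds by precisely this contradiction-and-iteration scheme, so you are not taking a different path. One small slip worth fixing: the induction claim $G(2^k m)>2^k G(m)$ is stated \emph{for all $k\ge 0$}, but at $k=0$ it reads $G(m)>G(m)$, which is false; the base case should be $k=1$ (or use $\ge$ at $k=0$ and strict for $k\ge 1$). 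Since the contradiction only uses $k\ge 1$, this is a bookkeeping correction, not a gap.
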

If $P$ is weakly universal, the above statement is satisfied for
Kullback-Leibler divergence 
\begin{align}
  G(n)=\sred_Q\kwad{H^P(X_1^n)-H^Q(X_1^n)}
  .
\end{align}
Hence we obtain that
\begin{align}
  \label{MCIMI}
  \sred_Q I^P(X_1^n;X_{n+1}^{2n})
  \ge
  \sred_Q I^Q(X_1^n;X_{n+1}^{2n})
\end{align}
holds for infinitely many $n$. Thus if relationships (\ref{Hilberg})
and (\ref{HilbergP}) hold uniformly for large $n$  then
\begin{align}
 \label{GammaBeta}
 \gamma\ge \beta 
 .
\end{align}
In other words, the smaller $\gamma$ we observe for a text (or
methodologically better, for a large sample of different texts), the
better bound it gives for $\beta$.  It can also be easily shown that
the bound is the tighter, the smaller compression rate $H^P(X_1^n)/n$
is, with the sole provision that distribution $P$ be weakly
universal. Results of our experiment suggest that this requirement is
essential.

Thus the question of upper bounding Hilberg's exponent $\beta$ boils
down, if we assume uniform information growth in (\ref{Hilberg}), to
finding appropriate universal distributions. Many methods have been
proposed for compression of texts in natural language, e.g.:
Lempel-Ziv (LZ) code \cite{ZivLempel77}, $n$-gram models
\cite{BrownOthers92,Jelinek97,ManningSchutze99}, prediction by partial
match (PPM) \cite{ClearyWitten84}, context tree weighting (CTW)
\cite{WillemsShtarkovTjalkens95}, probabilistic suffix trees (PST)
\cite{RonSingerTishby96}, grammar-based codes \cite{KiefferYang00},
PAQ codes \cite{SalomonMotta}, and switch distributions
\cite{ErvenGrunwaldRooij07}. These compression schemes can be divided
into two classes: (a) preadapted distributions, which are trained on
large corpora and achieve low compression rate---as low as 0.88 bpc
(bits per character) for WinRK
3.1.2,\footnote{\url{http://www.maximumcompression.com/data/text.php}}
and (b) adaptive distributions, which are not pre-trained and achieve
larger compression rate but are proven to be universal. Whereas the
distributions proposed so far belong either to class (a) or (b), for
upper bounding Hilberg's exponent, we need a distribution that would
combine the advantages of classes (a) and (b), namely low compression
rate and universality.

In this paper we propose and investigate two novel universal
distributions, one of which is not preadapted and the other is
preadapted.  The point of our departure is a modification of the
switch distributions proposed in
\cite{ErvenGrunwaldRooij07,KollenRooij13}.  The idea of a~switch
distribution is to use a mixture of adaptive Markov chains of varying
orders but, at each data point, the probabilities are partly
transferred among different orders. In this way, lower order Markov
chains are used to compress the data exclusively until enough
information is gathered to predict new outcomes with higher order
chains.  This avoids so called catch-up phenomenon and leads to much
better compression than while there is no transmission of
probabilities among different Markov chain orders
\cite{ErvenGrunwaldRooij07}. If we combine the idea of the switch
distribution with smoothing proposed in \cite[p. 111]{HindleRooth93}
and the idea of a universal distribution called R measure, proposed in
\cite{Ryabko10}, we obtain another new universal compression scheme,
which is efficiently computable.  This scheme will be called the plain
switch distribution. It is not preadapted yet. The preadapted switch
distribution is obtained by initializing the Markov chains with
frequencies coming from a large corpus and letting them gradually
adapt to the compressed source.  It will be shown that the preadapted
switch distributions is also universal. For the considered input text
the nonpreadapted and the preadapted switch distributions achieve
almost the same ultimate compression rate 2.21 bpc, approximately
twice smaller than for the LZ code. This figure is not so favorable as
for the WinRK 3.1.2 but we have a guarantee that the switch
distributions are universal.

Once we have constructed the universal switch distributions, we can
use them for upper bounding Hilberg's exponent. In the previous paper
\cite{Debowski13}, the LZ code was used for a sample of texts in
English which yielded $\gamma=0.94$. Here using the plain switch
distribution we obtain a slightly tighter bound $\gamma=
0.83$. Surprisingly, the preadapted switch distribution yields almost
the same compression rate for long blocks as the plain switch
distribution and does not give a tighter bound for
$\gamma$. Differences in the estimates of $\gamma$ may also stem from
differences in data representation. In \cite{Debowski13} the alphabet
of $D=27$ symbols was used. Here we use $D=256$ and obtain $\gamma=
0.89$ for the LZ code. It is important to underline that meaningful
estimates of $\gamma$ can be only obtained using universal
distributions. As we show, if a nonuniversal distribution is used, the
pointwise mutual information can be very low despite a good-looking
compression rate. To a certain extent this also applies to the
preadapted switch distribution, where the pointwise mutual information
is low for short blocks.

There remains a question what the estimates of the codewise Hilberg
exponent $\gamma$ tell about the true Hilberg exponent $\beta$ for
texts in natural language. In particular, how large can the difference
between $\gamma$ and $\beta$ be in case of the considered universal
codes?  Let us recall that for memoryless sources, i.e., IID processes
with $\beta=0$, the pointwise mutual information of the LZ code is of
order $I^P(X_1^n;X_{n+1}^{2n})\propto n/\log n$
\cite{LouchardSzpankowski97} so empirically we should observe
$\gamma\approx 1$. Thus the difference between between $\gamma$ and
$\beta$ can be close to $1$. Moreover, observing $\gamma\approx 1$ for
the LZ code we cannot reject the hypothesis that the source is IID. In
contrast, for stationary Markov chains, the pointwise mutual
information of a Bayesian mixture of Markov chains, which is the
building block for the switch distribution, cf. \cite{Grunwald07}, is
only of order $I^P(X_1^n;X_{n+1}^{2n})\propto \log n$ \cite{Atteson99}
so empirically we should observe $\gamma\approx 0$ in that case. The
same property carries over to the switch distributions introduced in
this paper.  Thus, observing $\gamma> 0$ for a switch distribution,
we are compelled to reject the hypothesis that the source is a Markov
chain. Possibly, $\beta>0$ may hold in that case. In other words,
given the presented empirical data, Hilberg's conjecture may be true
but we need still some stronger evidence in favor of this hypothesis,
such as a nontrivial lower bound for exponent $\beta$ (work under
review).

The further organization of the paper is as follows.  In Section
\ref{secSwitch}, we present the plain switch distribution. In Section
\ref{secAdapt}, we discuss the preadapted switch distribution.  In
Section \ref{secExperiments}, we investigate the codewise Hilberg's
conjecture (\ref{HilbergP}) experimentally using the introduced
distributions.

\section{The plain switch distribution}
\label{secSwitch}

The frequency of substring $w_1^k\in\klam{1,2,...,D}^k$ in string
$z_1^n\in\klam{1,2,...,D}^n$ will be denoted as
\begin{align}
  c(w_1^k|z_1^n)=\sum_{i=0}^{n-k}\boole{w_1^k=z_{i+1}^{i+k}}
  .
\end{align}
The plain switch distribution is defined as follows:
\begin{definition}[plain switch distribution]
  Define conditional probabilities $B(x_{n+1}|x_1^n,-1)=D^{-1}$ and
  \begin{align}
  \label{kBayesianII}
  B(x_{n+1}|x_1^n,k)
  % &=
  % \frac{
  %   \okra{\sum_{i=1}^n x_i+1}!\okra{n-\sum_{i=1}^n x_i}!(n+1)!
  % }{
  %   (n+2)!\okra{\sum_{i=1}^n x_i}!\okra{n-\sum_{i=1}^n x_i}!
  % }
  % \\
  &=
  \frac{
    c(x_{n+1-k}^{n+1}|x_1^n) +B(x_{n+1}|x_1^n,k-1)
  }{
    c(x_{n+1-k}^{n}|x_1^{n-1}) +1
  }
  .
\end{align}
Let coefficients $p_n\in(0,1)$, where $n=0,1,2,...$, satisfy
$\prod_{n=0}^{\infty} p_n>0$. Put also $q_n=1-p_n$. We define the
partial switch distribution $P(x_1^n,k)$ by conditions
\begin{align}
  \label{SwitchI}
  P(x_1,-1)&=p_0 B(x_1|-1)
  ,
  \\
  \label{SwitchII}
  P(x_1,0)&=q_0 B(x_1|0)
%  P(x_1,0)&=P(x_1|0)
  ,
  \\
  \label{SwitchIII}
  P(x_1^n,k)&=0
  \text{ for $k<-1$ or $k\ge n$}
%  \text{ for $k<0$ or $k\ge n$}
  ,
  \\
  P(x_1^{n+1},k)&=\kwad{
    p_n P(x_1^n,k)
    +q_n P(x_1^n,k-1)
  } B(x_{n+1}|x_1^n,k)
  \nonumber
  \\
  \label{SwitchIV}
  &\qquad\qquad\qquad\qquad\qquad\qquad\qquad
  \text{for $n\ge 1$ and $-1\le k\le n$}
%  \text{for $n\ge 1$ and $0\le k\le n$}
  .
\end{align}
The total probability for block $x_1^n$ according to the switch
distribution is
\begin{align}
  \label{TotalSwitch}
  P(x_1^n)=
  \sum_{k=-1}^{n-1} P(x_1^n,k)
%  \sum_{k=0}^{n-1} P(x_1^n,k)
  .
\end{align}
The scheme of computing $P(x_1^n)$ is depicted in Figure
\ref{figSwitch}.
\end{definition}

\emph{Remark 1:} Condition $\prod_{n=0}^{\infty} p_n>0$ holds for
instance if we fix
\begin{align}
  \label{pn}
  p_n=\exp\kwad{-(n+1)^{-\alpha}}, \quad \alpha>1
  .
\end{align}
Value $\alpha$ is a parameter.

\emph{Remark 2:} Probability $B(x_{n+1}|x_1^n,k)$ defines an adaptive
$k$-th order Markov model.  Probability $P(x_1^n,k)$ represents the
mass of the adaptive $k$-th order Markov model modified by
communication with models of lower orders.  The motivation for this
communication, carried out in formula (\ref{SwitchIV}), is that lower
order Markov models should be solely used for compression until enough
data are collected to predict new outcomes with higher order Markov
models, cf., the catch-up phenomenon described in
\cite{ErvenGrunwaldRooij07}. Distribution $P(x_1^n)$ is a~special case
of the general scheme of switch distributions considered by
\cite{ErvenGrunwaldRooij07,KollenRooij13} to overcome the catch-up
phenomenon.  In contrast to the models discussed in
\cite{ErvenGrunwaldRooij07,KollenRooij13}, the switch distribution
considered here is universal in the sense made precise in the
Introduction and still can be efficiently computed.

\emph{Remark 3:} Probability $B(x_{n+1}|x_1^n,k)$ of an adaptive
$k$-th order Markov model so as to be defined for all $x_1^n$ is
smoothed using probability $B(x_{n+1}|x_1^n,k-1)$ of an adaptive
$(k-1)$-th order Markov model in a way inspired by
\cite[p. 111]{HindleRooth93}. Thus we add the probability of a lower
order $B(x_{n+1}|x_1^n,k-1)$ in the numerator and $1$ in the
denominator rather than using the Laplace rule or the
Krichevski-Trofimov rule, i.e., adding $\alpha\in(0,1]$ in the
numerator and $\alpha D$ in the denominator as in, e.g.,
\cite{Ryabko10}. We have checked that this trick works better for
natural language data than the Laplace or Krichevski-Trofimov rules.

\emph{Remark 4:} The infinite mixture of probabilities
$B(x_{n+1}|x_1^n,k)$ for orders $k=0,1,2,...$, smoothed using the
Laplace or Krichevski-Trofimov rules, was investigated in
\cite{Ryabko10} under the name of R measure and shown to be a
universal distribution. Our construction is quite similar in spirit
but avoids the catch-up phenomenon.

\begin{figure}[t]
  \centering
\includegraphics[width=0.7\textwidth]{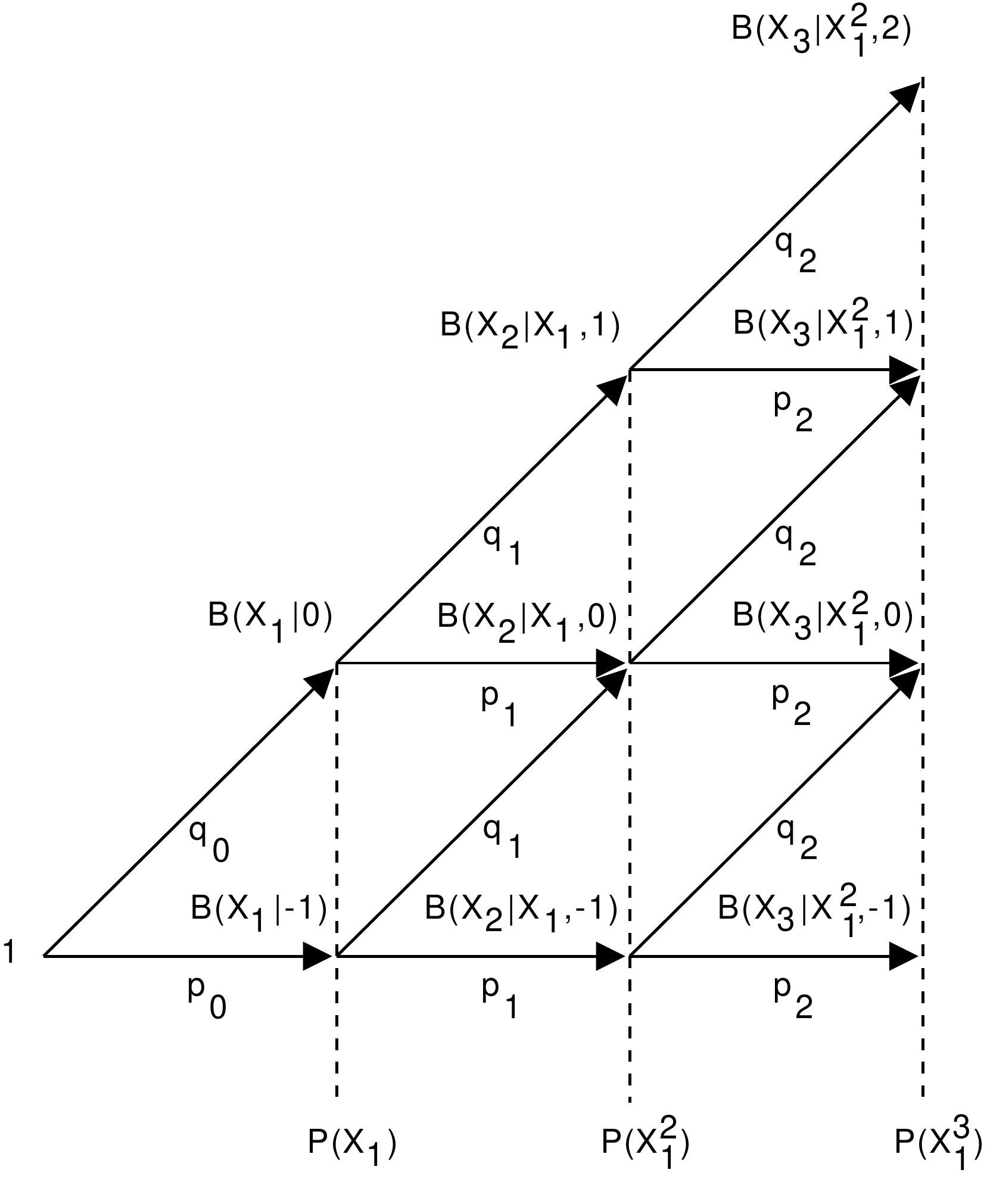}
  \caption{\label{figSwitch}
    The scheme of computing  $P(x_1^3)$.
}
\end{figure}

Now we will show that the switch distribution (\ref{TotalSwitch}) is
both strongly and weakly universal. First we need this simple fact:
\begin{lemma}
  \label{theoDomination}
  Introduce notation
  \begin{align}
    \label{TotalB}
    B(x_l^n|x_{l-k}^{l-1},k)=\prod_{i=l}^n B(x_{i}|x_1^{i-1},k)
    .
  \end{align}
  The switch distribution satisfies the following:
  \begin{enumerate}
  \item[i)] there exists a~constant $\delta_{-1}>0$ such that for all
    $n\ge 1$ we have
    \begin{align}
      \label{BayesianDomI}
      P(x_1^n)&\ge \delta_{-1} D^{-n}
      ,
    \end{align}
  \item[ii)] for each $k\ge 0$ there exists a~constant $\delta_k>0$
    such that for all $n\ge k+1$ we have
    \begin{align}
    \label{BayesianDomII}
    P(x_1^n)&\ge \delta_k B(x_{k+1}^n|x_1^{k},k)
    .
  \end{align}
\end{enumerate}  
\end{lemma}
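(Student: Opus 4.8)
The plan is to extract from the defining recursion (\ref{SwitchIV}) a lower bound on the single term $P(x_1^n,k)$ for each fixed order $k$, and then use (\ref{TotalSwitch}) to pass to $P(x_1^n)$. The key observation is that each step of the recursion multiplies $P(x_1^n,k)$ by at least $p_n B(x_{n+1}|x_1^n,k)$, since the bracket $\kwad{p_n P(x_1^n,k)+q_n P(x_1^n,k-1)}$ dominates $p_n P(x_1^n,k)$ (all quantities being nonnegative). Iterating this from the first index at which order $k$ becomes ``active'' up to $n$ yields
\begin{align}
  P(x_1^n,k)\ge \okra{\prod_{i=?}^{n-1} p_i} \cdot (\text{initial mass for order }k) \cdot B(x_{k+1}^n|x_1^{k},k)
  ,
\end{align}
where the product of the $p_i$'s is bounded below by $\prod_{i=0}^\infty p_i>0$ by the hypothesis on the coefficients, and the ``initial mass'' is a positive constant depending only on $k$ (and on $q_0,\dots$, the first few $p_i$'s). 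Summing the single term into (\ref{TotalSwitch}) then gives the claimed bounds, with $\delta_k$ absorbing the infinite product and the initial-mass constant.

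First I would treat part (i). Here the relevant order is $k=-1$: from (\ref{SwitchI}) we have $P(x_1,-1)=p_0 B(x_1|-1)=p_0 D^{-1}$, and then by (\ref{SwitchIV}) with $k=-1$ (so that the $q_n P(x_1^n,k-1)$ term vanishes because $P(x_1^n,-2)=0$ by (\ref{SwitchIII})) we get exactly $P(x_1^{n+1},-1)=p_n P(x_1^n,-1)B(x_{n+1}|x_1^n,-1)=p_n P(x_1^n,-1)D^{-1}$. Unrolling this gives $P(x_1^n,-1)=\okra{\prod_{i=0}^{n-1}p_i}D^{-n}\ge \okra{\prod_{i=0}^\infty p_i}D^{-n}$, and since $P(x_1^n)\ge P(x_1^n,-1)$ by (\ref{TotalSwitch}), we may take $\delta_{-1}=\prod_{i=0}^\infty p_i>0$.

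Next I would treat part (ii) for fixed $k\ge 0$. The order-$k$ mass is first nonzero at index $n=k+1$ (by (\ref{SwitchIII}), $P(x_1^n,k)=0$ for $k\ge n$), where (\ref{SwitchIV}) gives $P(x_1^{k+1},k)=\kwad{p_k P(x_1^k,k)+q_k P(x_1^k,k-1)}B(x_{k+1}|x_1^k,k)\ge q_k P(x_1^k,k-1)B(x_{k+1}|x_1^k,k)$; iterating this lower bound down through orders $k-1,\dots,0$ and using (\ref{SwitchII}) at the bottom shows $P(x_1^{k+1},k)\ge c_k B(x_{k+1}|x_1^k,k)$ for some constant $c_k>0$ depending only on $k$ (a product of finitely many $q_i$'s and $p_i$'s times $D^{-1}$, roughly). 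Then for $n>k+1$, iterating $P(x_1^{n}, k)\ge p_{n-1} P(x_1^{n-1},k)B(x_{n}|x_1^{n-1},k)$ from index $k+1$ up to $n$ gives $P(x_1^n,k)\ge c_k\okra{\prod_{i=k+1}^{n-1}p_i}\prod_{i=k+1}^{n}B(x_i|x_1^{i-1},k)$. Recognizing the last product as $B(x_{k+1}^n|x_1^{k},k)$ via notation (\ref{TotalB}), bounding $\prod_{i=k+1}^{n-1}p_i\ge\prod_{i=0}^\infty p_i$, and using $P(x_1^n)\ge P(x_1^n,k)$, we obtain (\ref{BayesianDomII}) with $\delta_k=c_k\prod_{i=0}^\infty p_i$.

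I do not expect a serious obstacle here; the statement is essentially a bookkeeping consequence of the recursion. The one point requiring a little care is making sure that when we ``drop'' the $q_nP(x_1^n,k-1)$ term (for part (ii)) versus the $p_nP(x_1^n,k)$ term (for the iteration) we are consistent about which order we are tracking, and that the base cases at $n=k+1$ correctly account for the factors $q_0,\dots,q_k$ (or $p_0$ in the $k=-1$ case) coming from (\ref{SwitchI})--(\ref{SwitchII}); everything else is a telescoping product bounded below by the convergent infinite product $\prod_{n=0}^\infty p_n$.
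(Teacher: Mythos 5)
Your proof is correct and follows the same path-selection strategy as the paper: lower-bound $P(x_1^n)$ by the contribution of a single trajectory through the recursion lattice (stay at order $-1$ throughout for (i); climb one order per step on the $q_i$'s up to order $k$, then ride order $k$ on the $p_i$'s for (ii)), and absorb the tail via $\prod_{n} p_n > 0$. One remark on the word ``roughly'' in your part (ii): the intermediate factors $B(x_{i+1}|x_1^i,i)$ for $i<k$ are \emph{not} equal to $D^{-1}$ (the paper's $D^{-k}$ at the corresponding step is also inexact), but since each depends only on the finitely many strings $x_1^{k}\in\klam{1,\dots,D}^{k}$ and is strictly positive, their product is bounded below by a positive constant depending only on $k$ and $D$, which is all that is needed for $\delta_k>0$; you should make this explicit rather than leaving it at ``roughly.''
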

\begin{proof}
  For $n\ge 1$ we have
\begin{align}
  P(x_1^n)&\ge 
  \okra{\prod_{i=0}^{n-1} p_i B(x_{i+1}|x_1^i,-1)} 
  \ge \delta_{-1} P(x_1^n|-1)
  .
\end{align}
where $\delta_{-1}=\prod_{i=0}^{\infty} p_i>0$.  Thus we have claim
(i). On the other hand, for $k\ge 0$ and $n\ge k+1$ we obtain
\begin{align}
  P(x_1^n)&\ge 
  \okra{\prod_{i=0}^{k} q_i B(x_{i+1}|x_1^i,i)} 
  \okra{\prod_{i=k+1}^{n-1} p_i B(x_{i+1}|x_1^i,k)} 
  \nonumber
  \\
  &=
  \okra{\prod_{i=0}^{k} q_i} D^{-k}
  \okra{\prod_{i=k+1}^{n-1} p_i}
  B(x_{k+1}^n|x_1^{k},k)
  \nonumber
  \\
  &\ge
  \delta_k B(x_{k+1}^n|x_1^{k},k)
  ,
\end{align}
where 
\begin{align}
  \delta_k =   
  \okra{\prod_{i=0}^{k} q_i} D^{-k}
  \okra{\prod_{i=k+1}^{\infty} p_i}
  >0
  .
\end{align}
Hence the claim (ii) follows.  
\end{proof}
% For this choice of $q_{n k}$, we obtain the desired domination
% (\ref{BayesianDom}) with 
% \begin{align}
%   \delta\le \kwad{1-\exp(-k^{-\alpha})}^k D^{-k} \exp(-\zeta(\alpha)) 
% \end{align}
% and $\zeta(\alpha):=\sum_{k=1}^\infty k^{-\alpha}$ being the zeta
% function.

Combining Lemma \ref{theoDomination}(ii) with the ergodic theorem we
obtain the proof of universality.
\begin{theorem}
  \label{theoUniversal}
  The switch distribution is strongly and weakly universal.
\end{theorem}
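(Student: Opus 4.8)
The plan is to deduce the theorem from the domination bound of Lemma~\ref{theoDomination}(ii) together with Birkhoff's ergodic theorem, letting the Markov order grow at the end via the identity $h_Q=\inf_{k}\sred_Q\kwad{-\log Q(X_{k+1}|X_1^k)}$ recorded in~(\ref{EntropyRate}). For a stationary $Q$ and an order $k\ge 0$ write $\tilde Q_k(x_{k+1}^n|x_1^k):=\prod_{i=k+1}^n Q\okra{X_i=x_i\,|\,X_{i-k}^{i-1}=x_{i-k}^{i-1}}$ for the $k$-th order Markov approximation of the conditional block probability. The key step, which is the only nonroutine part, will be to show that the adaptive $k$-th order model $B(\cdot\,|\cdot,k)$ dominates this approximation up to a polynomial factor: there are constants $a_k,c_k>0$ depending only on $k$ and $D$ with
\begin{align}
  \label{MarkovDom}
  B(x_{k+1}^n|x_1^{k},k)\ \ge\ a_k\,n^{-c_k}\,\tilde Q_k(x_{k+1}^n|x_1^k)
  \quad\text{for all }n>k\text{ and all }x_1^n .
\end{align}

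Granting~(\ref{MarkovDom}), Lemma~\ref{theoDomination}(ii) gives $H^P(x_1^n)\le -\log(\delta_k a_k)+c_k\log n+\sum_{i=k+1}^n\kwad{-\log Q(x_i|x_{i-k}^{i-1})}$. Dividing by $n$ and applying Birkhoff's theorem to the $L^1(Q)$ function $(x_1^{k+1})\mapsto -\log Q(X_{k+1}=x_{k+1}|X_1^k=x_1^k)$ shows, for ergodic $Q$, that $\limsup_n\frac1n H^P(X_1^n)\le\sred_Q\kwad{-\log Q(X_{k+1}|X_1^k)}$ holds $Q$-almost surely; taking the infimum over $k$ (a countable intersection of full-measure events) and invoking~(\ref{EntropyRate}) yields $\limsup_n\frac1n H^P(X_1^n)\le h_Q$ a.s., which is~(\ref{StronglyUniversal}). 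For an arbitrary stationary $Q$ one instead takes $\sred_Q$ of the same inequality and uses stationarity to replace every summand by $\sred_Q\kwad{-\log Q(X_{k+1}|X_1^k)}$; letting $n\to\infty$ and then $k\to\infty$ gives $\limsup_n\frac1n\sred_Q H^P(X_1^n)\le h_Q$, and the reverse inequality $\liminf_n\frac1n\sred_Q H^P(X_1^n)\ge h_Q$ is immediate from the nonnegativity of $\sred_Q\kwad{H^P(X_1^n)-H^Q(X_1^n)}$ together with~(\ref{EntropyRate}); hence~(\ref{WeaklyUniversal}).

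It remains to establish~(\ref{MarkovDom}). Expanding $B(x_{k+1}^n|x_1^{k},k)=\prod_{i=k+1}^n B(x_i|x_1^{i-1},k)$ and grouping the factors by the length-$k$ context $b=x_{i-k}^{i-1}$, a short counting argument shows that in~(\ref{kBayesianII}) the denominator at the $\ell$-th occurrence of the context $b$ equals exactly $\ell$, while the count in the numerator at the $j$-th occurrence of a $(k+1)$-gram $ba$ equals exactly $j-1$. Consequently, if $N_b$ and $M_{ba}$ denote the numbers of occurrences of $b$ and of $ba$ within $x_{k+1}^n$, the factors with context $b$ telescope---up to the smoothing terms $B(x_i|x_1^{i-1},k-1)$---into $\prod_a M_{ba}!\,/\,N_b!=\binom{N_b}{(M_{ba})_a}^{-1}$. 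The smoothing terms matter only at the at most $D^{k+1}$ first occurrences of the various $(k+1)$-grams, and there the elementary bound $B(x_i|x_1^{i-1},k-1)\ge D^{-1}(2i)^{-k}$, obtained by induction on the order from $B(x_i|x_1^{i-1},-1)=D^{-1}$ and $c(\cdot|x_1^{i-1})\le i$, shows their total contribution is only a factor $\mathrm{poly}(n)$. Finally, the multinomial inequality $\binom{N_b}{(M_{ba})_a}\prod_a Q(a|b)^{M_{ba}}\le 1$ turns $\binom{N_b}{(M_{ba})_a}^{-1}$ into $\prod_a Q(a|b)^{M_{ba}}$, and multiplying over the at most $D^{k}$ contexts reassembles $\prod_{i=k+1}^n Q(x_i|x_{i-k}^{i-1})=\tilde Q_k(x_{k+1}^n|x_1^k)$, at the cost of the advertised factor; an explicit choice is $a_k=(2^kD)^{-D^{k+1}}$, $c_k=(k+1)D^{k+1}$.

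The main obstacle is this last accounting: one must verify that the context counts line up with the telescoping products exactly, so that no spurious factor growing like $\exp(\Theta(n))$ slips in---such a factor does appear in a naive term-by-term comparison of $B(\cdot|\cdot,k)$ with a Krichevski-Trofimov estimator---and that the smoothing terms at first occurrences cost only $\mathrm{poly}(n)$. Once~(\ref{MarkovDom}) is in place, the rest is a routine application of the ergodic theorem and of facts already available in the excerpt. (For a non-ergodic stationary $Q$ one may alternatively run the ergodic theorem conditionally on the shift-invariant $\sigma$-algebra, replacing $\sred_Q\kwad{-\log Q(X_{k+1}|X_1^k)}$ by the corresponding quantity for the random ergodic component $F$ and using that it is, in $\sred_Q$-mean, no larger, since conditioning reduces entropy; this gives weak universality without passing through expectations.)
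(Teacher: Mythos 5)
Your proof is correct, but it takes a genuinely different route from the paper's. The paper proves strong universality by applying the ergodic theorem twice: first to get that the online conditional estimates $B(X_n|X_1^{n-1},k)$ converge $Q$-a.s.\ to the true conditional probabilities, and then to Ces\`aro-average the resulting log-likelihoods; weak universality is then obtained by citing \cite{Weissman05}, using Lemma~\ref{theoDomination}(i). You instead first establish a \emph{deterministic, $Q$-uniform regret bound}, your~(\ref{MarkovDom}), relating the adaptive model $B(\cdot|\cdot,k)$ to the $k$-th order Markov approximation $\tilde Q_k$, and then invoke Birkhoff's theorem only once, applied to the bounded (finite-alphabet) function $-\log Q(X_{k+1}|X_1^k)$. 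This buys you two things the paper does not provide directly: an explicit $O(\log n)$ pointwise redundancy bound relative to any $k$-th order Markov source, and a self-contained proof of weak universality by taking expectations, without appealing to \cite{Weissman05}. The paper's route is shorter and leans on standard ergodic facts; yours is more quantitative and self-contained but must carry the burden of proving~(\ref{MarkovDom}), which the paper entirely avoids.

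On that burden: your sketch of~(\ref{MarkovDom}) is essentially sound but slightly misstated in the telescoping. Dropping the smoothing term $B(\cdot|\cdot,k-1)$ from all but the first occurrence of each $(k+1)$-gram $ba$ yields the lower bound $\prod_a B^{(1)}_{ba}(M_{ba}-1)!\,/\,N_b!$, not $\prod_a M_{ba}!/N_b!$; the extra factor $\prod_a M_{ba}\le N_b^D\le n^D$ per context is what you must (and implicitly do) absorb, together with the first-occurrence bound $B(\cdot|\cdot,k-1)\ge D^{-1}i^{-k}$, to reach the multinomial inequality $\prod_a M_{ba}!/N_b!\ge\prod_a Q(a|b)^{M_{ba}}$. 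Your final constants $a_k=(2^kD)^{-D^{k+1}}$, $c_k=(k+1)D^{k+1}$ in fact already reflect this extra polynomial loss, so nothing is broken, but the prose ``up to the smoothing terms'' should be tightened to make explicit that you keep the smoothing term only at first occurrences and separately account for the shift from $(M_{ba}-1)!$ to $M_{ba}!$.

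Two small additional remarks. First, the paper's own convergence step silently uses that $Q(X_n|X_{n-k}^{n-1})$ is $Q$-a.s.\ bounded away from $0$ (finite alphabet), so that $-\log B\to-\log Q$ is Ces\`aro-summable; your regret-bound route sidesteps this delicacy entirely. Second, your parenthetical alternative of running the ergodic theorem conditionally on the invariant $\sigma$-algebra for nonergodic $Q$ works, but the direct expectation argument you give is simpler and already suffices.
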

\begin{proof}
  Let $Q$ be a stationary ergodic distribution.  Since the alphabet of
  $X_i$ is finite, by the ergodic theorem differences
  $B(X_{n}|X_1^{n-1},k)-Q(X_{n}|X_{n-k-1}^{n-1})$ converge to $0$
  $Q$-almost surely. Hence
  \begin{align}
    \lim_{n\rightarrow\infty}\frac{1}{n} \kwad{-\log
      B(X_{k+1}^n|X_1^k,k)} = \lim_{n\rightarrow\infty}\frac{1}{n}
    \kwad{-\sum_{i=k+1}^n \log Q(X_{i}|X_{i-k-1}^{i-1})}
    .
  \end{align}
  Applying the ergodic theorem again, we obtain
  \begin{align}
    \lim_{n\rightarrow\infty}\frac{1}{n}
    \kwad{-\sum_{i=k+1}^n \log Q(X_{i}|X_{i-k-1}^{i-1})}
    =\sred_Q \kwad{-\log Q(X_{k+1}|X_1^k)}
    .
  \end{align}

  Now we combine these facts with Lemma \ref{theoDomination}(ii),
  which yields
  \begin{align}
    \limsup_{n\rightarrow\infty} \frac{1}{n} \kwad{-\log P(X_1^n)}
    &\le
    \inf_{k\in\mathbb{N}} \lim_{n\rightarrow\infty}\frac{1}{n} \kwad{-\log
      B(X_{k+1}^n|X_1^k,k)}
    \\
    &=
    \inf_{k\in\mathbb{N}} \sred_Q \kwad{-\log Q(X_{k+1}|X_1^k)}=h_Q
    .
  \end{align}
  Hence the distribution $P$ is strongly universal.  Moreover, as
  shown in \cite{Weissman05}, the claim of Lemma
  \ref{theoDomination}(i) and the strong universality are sufficient
  conditions that distribution $P$ be weakly universal.
\end{proof}

A~naive implementation of the switch distribution $P(x_1^n)$ has the
time complexity $O(n^4)$ for the following reason: There are $O(n^2)$
calls of $B(x_{l+1}|x_1^l,k)$ where $k,l\le n$ and in a naive
implementation each $B(x_{l+1}|x_1^l,k)$ has time complexity
$O(kl)$. This is, however, a very careless approach and usually we can
do much better.  Let us denote the maximal length of a~substring that
appears at least twice in a string $z_1^n$ as
\begin{align}
\label{DefL}
  \mathbf{L}(z_1^n)&:=\max 
  \klam{k: \exists w_1^k: c(w_1^k|z_1^n)>1}
  .
\end{align}
For brevity, $\mathbf{L}(z_1^n)$ will be called the depth of $z_1^n$.
\begin{theorem}
  \label{theoComputable}
  The value of the switch distribution $P(x_1^n)$ can be computed in
  time $O(ns)$ where $s=\mathbf{L}(x_1^n)$ is the depth of $x_1^n$.
\end{theorem}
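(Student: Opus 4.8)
The plan is to exploit a single structural fact: once the order $k$ exceeds the depth $\mathbf{L}(x_1^l)$ of the prefix read so far, the adaptive Markov probability $B(x_{l+1}|x_1^l,k)$ no longer depends on $k$; this collapses both the number of orders that must be carried through the switch recursion and the cost of evaluating the Markov probabilities from $\Theta(n)$ to $O(s)$ per position. Concretely, if $k>\mathbf{L}(x_1^l)$ then both counts in (\ref{kBayesianII}) vanish: a positive value of $c(x_{l+1-k}^{l+1}|x_1^l)$ or of $c(x_{l+1-k}^{l}|x_1^{l-1})$ would place the length-$k$ word $x_{l+1-k}^{l}$ at two distinct positions of $x_1^l$, impossible for $k>\mathbf{L}(x_1^l)$. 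Hence (\ref{kBayesianII}) reduces to $B(x_{l+1}|x_1^l,k)=B(x_{l+1}|x_1^l,k-1)$, and by induction $B(x_{l+1}|x_1^l,k)=B(x_{l+1}|x_1^l,\mathbf{L}(x_1^l))$ for all $k\ge\mathbf{L}(x_1^l)$. Since $l\mapsto\mathbf{L}(x_1^l)$ is nondecreasing and bounded by $s=\mathbf{L}(x_1^n)$, this common value is already reached at order $s$ for every $l\le n$.

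\emph{A reduced forward recursion.} Instead of the array $\klam{P(x_1^l,k):-1\le k\le l-1}$, which has $\Theta(n^2)$ entries in total, I would keep only $P(x_1^l,-1),\dots,P(x_1^l,s-1)$ together with the aggregate $R_l:=\sum_{k\ge s}P(x_1^l,k)$. The individual entries with $k\le s-1$ are updated directly by (\ref{SwitchIV}). Summing (\ref{SwitchIV}) over $k\ge s$ and using the collapse of $B$, so that every factor $B(x_{l+1}|x_1^l,k)$ with $k\ge s$ equals the common value $\bar b_l:=B(x_{l+1}|x_1^l,\mathbf{L}(x_1^l))$, yields the closed recursion $R_{l+1}=\bar b_l\okra{R_l+q_l\,P(x_1^l,s-1)}$, which refers only to stored quantities. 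By (\ref{TotalSwitch}) and $P(x_1^n,k)=0$ for $k\ge n$ we then get $P(x_1^n)=\sum_{k=-1}^{s-1}P(x_1^n,k)+R_n$. So one left-to-right pass does $O(s)$ arithmetic per position, provided the $O(s)$ values $B(x_{l+1}|x_1^l,k)$, $-1\le k\le s$, are available in $O(s)$ time per position.

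\emph{Evaluating the Markov probabilities.} By (\ref{kBayesianII}) it suffices to know, after reading $x_1^l$ and the symbol $x_{l+1}$, the number of occurrences inside $x_1^l$ of every suffix of $x_1^l$ of length $\le\mathbf{L}(x_1^l)$, and inside $x_1^{l+1}$ of every suffix of $x_1^{l+1}$ of length $\le\mathbf{L}(x_1^l)+1$; indeed $c(x_{l+1-k}^{l+1}|x_1^l)$ is one less than the number of occurrences of the length-$(k+1)$ suffix of $x_1^{l+1}$ in $x_1^{l+1}$, and $c(x_{l+1-k}^{l}|x_1^{l-1})$ is one less than the number of occurrences of the length-$k$ suffix of $x_1^l$ in $x_1^l$. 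I would obtain these counts from a suffix automaton of the text built online in linear total time and augmented with the size of the end-position set of each state. The fact that keeps the per-symbol cost at $O(s)$ is that the suffix link of the last state of $x_1^l$ already has length attribute $\le\mathbf{L}(x_1^l)$ --- its longest word is a repeated proper suffix of $x_1^l$ --- so following suffix links from the last state meets every state of length attribute $\le\mathbf{L}(x_1^l)$ after a single step, and this whole chain has $O(s)$ states; reading off the required suffix counts therefore costs $O(s)$ per symbol, and the only extra bookkeeping for the end-position sizes (incrementing them along the suffix-link chain of each newly created last state, and initializing a freshly cloned state to the size of its suffix-link parent before that increment) touches only $O(s)$ states per symbol for the same reason. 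A rolling-hash table indexed by the $O(s)$ suffixes of the current prefix of length $\le s+1$ would serve just as well and is easier to describe. Putting the three parts together gives running time $O(ns)$.

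\emph{Where the difficulty lies.} The delicate part is the last one: one must check that occurrence counts can be maintained \emph{online} --- correct and available already at position $l$, not only after a terminal sweep of the automaton --- at amortized cost $O(s)$ per symbol, and in particular that the clone operation in the suffix-automaton construction is handled so that the sizes stay exact. Everything else --- the collapse of $B$ above the depth, the aggregate recursion for $R_l$, and the arithmetic --- is routine bookkeeping once that fact is in place.
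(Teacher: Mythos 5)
Your proposal is correct and follows essentially the same route as the paper's own proof. The two load-bearing observations are identical: (a) for $k$ above the depth the two counts in (\ref{kBayesianII}) must vanish (the paper states this as $B(x_{l+1}|x_1^l,k)=B(x_{l+1}|x_1^l,s)$ for $k>s$; you prove the sharper per-prefix version $k>\mathbf{L}(x_1^l)$ and then observe it is subsumed by the global $s$), and (b) this collapse lets one replace the $\Theta(n)$-wide array $\{P(x_1^l,k)\}_k$ by $O(s)$ explicit entries plus a single aggregate --- your $R_l$ is, up to a one-index shift in where the cut is made, exactly the paper's dummy variable $P(x_1^n,\bullet)$, and your closed recursion $R_{l+1}=\bar b_l(R_l+q_l P(x_1^l,s-1))$ is the paper's (\ref{SwitchIVB}). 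On the count-maintenance side the paper is deliberately terse ("store the frequencies of substrings tested in formula (\ref{kBayesianII}) and increment them on line", with the suffix tree only used once up front to obtain $s$), whereas you take a detour through a suffix automaton with online end-position sizes before retreating to the rolling-hash table; your final answer coincides with the paper's intent, and the automaton discussion is extra caution rather than a different argument. No gap; the only cosmetic difference is that the paper assumes $s$ is precomputed in $O(n)$ and leaves the $O(s)$-suffixes-per-position bookkeeping implicit, while you spell it out.
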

\emph{Remark:} The depth $\mathbf{L}(X_1^n)$ is bounded by $O(\log n)$
for a~large class of processes called finite-energy processes. They
can be obtained by dithering ergodic processes with an IID noise
\cite{Shields97}. For texts in natural language, an experiment indicates
that the depth $\mathbf{L}(x_1^n)$ is of order $O((\log n)^\alpha)$,
where $\alpha<4$ \cite{Debowski12b}.
\begin{proof}
  We can knock down the complexity of an individual call of
  $B(x_{l+1}|x_1^l,k)$ to a~constant if we store the frequencies of
  substrings tested in formula (\ref{kBayesianII}) and we increment
  them on line. Some further important savings can be done if we know
  the depth $s=\mathbf{L}(x_1^n)$. The value of $s$ can be computed in
  time $O(n)$ by building the suffix tree of $x_1^n$
  \cite{Ukkonen95}. Once we have that $s$, let us observe that
\begin{align}
  B(x_{l+1}|x_1^l,k)=B(x_{l+1}|x_1^l,s)
\end{align}
holds for all $k>s$.  

Thus we can flush all probabilities $P(x_1^n,k)$ for $k>s$ into
a~dummy variable $P(x_1^n,\bullet)$.  In the following, without
affecting the value of $P(x_1^n)$, the recursion
(\ref{SwitchIII})--(\ref{SwitchIV}) can be altered to
\begin{align}
  \label{SwitchIIIA}
  P(x_1^n,k)&=0
  \text{ for $k<-1$ or $k\ge n$}
%  \text{ for $k<0$ or $k\ge \min(n,s)$}
  ,
  \\
  \label{SwitchIIIB}
  P(x_1^n,\bullet)&=0
  \text{ for $n< s+1$}
  ,
  \\
    P(x_1^{n+1},k)&=\kwad{
    p_n P(x_1^n,k)
    +q_n P(x_1^n,k-1)
  } B(x_{n+1}|x_1^n,k)
  \nonumber
  \\
  \label{SwitchIVA}
  &\qquad\qquad\qquad\qquad\qquad
  \text{for $n\ge 1$ and $-1\le k\le \min(n,s)$}
%  \text{for $n\ge 1$ and $0\le k\le \min(n,s)$}
  ,
  \\
    P(x_1^{n+1},\bullet)&=\kwad{
    P(x_1^n,\bullet)
    +q_n P(x_1^n,s)
  } B(x_{n+1}|x_1^n,s)
  \label{SwitchIVB}
  \text{ for $n\ge s+1$}
  .
\end{align}
The formula for the total probability becomes
\begin{align}
  \label{TotalSwitchA}
  P(x_1^n)=
  \sum_{k=-1}^s P(x_1^n,k) + P(x_1^n,\bullet)
%  \sum_{k=0}^m P(x_1^n,k) + P(x_1^n,\bullet)
  .
\end{align}
Hence the time complexity of $P(x_1^n)$ is of order $O(ns)$.
\end{proof}

The space complexity of the switch distribution can also be reduced by
observing that in order to compute $B(x_{l+1}|x_1^l,k)$ we only need
to store the frequencies of substrings $w$ that appear in $x_1^n$ at
least twice and the frequencies of their extensions $wa$, where
$a\in\klam{1,2,...,D}$. These strings can be also found while
building the suffix tree of $x_1^n$.

Parameter $s$ in the algorithm
(\ref{SwitchIIIA})--(\ref{TotalSwitchA}) will be called the depth of
the switch distribution. Without a significant change of $P(x_1^n)$,
the depth of the switch distribution can be chosen as much smaller
than the depth of string $x_1^n$. This fact can be also used for the
further speed-up of computation.  Fixing the depth, however, leads
asymptotically to the $Q$-almost sure bound
\begin{align}
  \lim_{n\rightarrow\infty} \frac{1}{n} \kwad{-\log P(X_1^n)}
  &=
  \lim_{n\rightarrow\infty}\frac{1}{n} \kwad{-\log
    B(X_{s+1}^n|X_1^s,s)}
  \\
  &=
  \sred_Q \kwad{-\log Q(X_{s+1}|X_1^s)}
\end{align}
if $Q$ is ergodic.  Conditional entropy $\sred_Q \kwad{-\log
  Q(X_{s+1}|X_1^s)}$ is greater than $h_Q$.

\section{The preadapted switch distribution}
\label{secAdapt}

Often we want to predict or compress data $x_1^n$ that are generated
by a class of complex unknown distributions $Q$ that partly resemble
the empirical distribution of another, much larger data $y_1^j$. Such
a case arises in particular in the compression of texts in natural
language. Then using a universal distribution such as the plain switch
distribution need not be the best approach, since this distribution
has to learn all frequencies of substrings from the data $x_1^n$. A
competing approach is to use frequencies of substrings from the larger
data $y_1^j$. This can yield a better compression rate for finite data
$x_1^n$. The problem of using a fixed empirical distribution of
$y_1^j$ is, however, that it is not universal. The source of the
problem lies in using non-adaptive substring frequencies. A simple
solution for this problem is to initialize the substring frequencies
with the frequencies coming from $y_1^j$ and let them gradually adapt
to $x_1^n$. In this way we obtain a preadapted universal compression
scheme. One can suppose that this scheme may compress better than both
the plain switch distribution and the empirical distribution of
$y_1^j$.

Let us clarify this idea. 
\begin{definition}[fixed switch distribution]
  Let $y_1^j$ be a fixed sequence, called the training data.  Define
  conditional probabilities $B(x_{n+1}|x_1^n,-1)=D^{-1}$ and
  \begin{align}
    \label{kBayesianIIEmp}
    B(x_{n+1}|x_1^n,k)
    &=
    \frac{
      c(x_{n+1-k}^{n+1}|y_1^j) +B(x_{n+1}|x_1^n,k-1)
    }{
      c(x_{n+1-k}^{n}|y_1^{j-1}) +1
    }
    .
  \end{align}
  Using these $B(x_{n+1}|x_1^n,k)$, we define the fixed switch
  distribution $P$ via formulae (\ref{SwitchI})--(\ref{TotalSwitch}).
\end{definition}

For short blocks $x_1^n$, the fixed switch distribution can achieve
much lower compression rate than the plain switch distribution but it
is not universal.  To obtain a universal distribution which combines
the advantages of the fixed switch distribution and the plain switch
distribution, we may consider a~compromise between expressions
(\ref{kBayesianII}) and (\ref{kBayesianIIEmp}). This can be done
easily as follows.
\begin{definition}[preadapted switch distribution]
  Let $y_1^j$ be a fixed sequence, called the training data.
  Define conditional probabilities $B(x_{n+1}|x_1^n,-1)=D^{-1}$
  and
  \begin{align}
  B(x_{n+1}|x_1^n,k)=
  \label{kBayesianIIAdapt}
  &
  \frac{
    c(x_{n+1-k}^{n+1}|y_1^jx_1^n) +B(x_{n+1}|x_1^n,k-1)
  }{
    c(x_{n+1-k}^{n}|y_1^jx_1^{n-1}) +1    
  }
  .
\end{align}
Using these $B(x_{n+1}|x_1^n,k)$, we define the preadapted switch
distribution $P$ via formulae (\ref{SwitchI})--(\ref{TotalSwitch}).
\end{definition}

As in the plain case, we can show that the preadapted switch
distribution is universal and efficiently computable. The proof of
universality relies on the observation that the influence of training
data $y_1^j$ on the probability of long blocks $x_1^n$ is
asymptotically negligible.
\begin{theorem}
  \label{theoUniversalAdapt}
  The preadapted switch distribution is strongly and weakly
  universal.
\end{theorem}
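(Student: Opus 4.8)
The plan is to mimic the proof of Theorem \ref{theoUniversal} as closely as possible, the only new ingredient being a bound showing that the contribution of the training data $y_1^j$ to the conditional probabilities $B(x_{n+1}|x_1^n,k)$ defined by (\ref{kBayesianIIAdapt}) washes out as $n\to\infty$. First I would observe that Lemma \ref{theoDomination} carries over verbatim: its proof used only the recursion (\ref{SwitchI})--(\ref{SwitchIV}) and the fact that $B(x_{n+1}|x_1^n,-1)=D^{-1}$, both of which still hold, so again $P(x_1^n)\ge\delta_{-1}D^{-n}$ and $P(x_1^n)\ge\delta_k B(x_{k+1}^n|x_1^k,k)$ with the same constants $\delta_k>0$. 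Hence, exactly as before, it suffices to show that for a stationary ergodic $Q$ and each fixed $k$,
\begin{align}
  \limsup_{n\rightarrow\infty} \frac{1}{n}\kwad{-\log B(X_{k+1}^n|X_1^k,k)}
  \le
  \sred_Q\kwad{-\log Q(X_{k+1}|X_1^k)}
  \qquad Q\text{-a.s.},
\end{align}
and then take the infimum over $k$; weak universality then follows from Lemma \ref{theoDomination}(i) and \cite{Weissman05} as in Theorem \ref{theoUniversal}.

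The key step is therefore to control $B(X_{i}|X_1^{i-1},k)$ from (\ref{kBayesianIIAdapt}) for large $i$. Write the counts as $c(w_1^{k+1}|y_1^jX_1^{i-1}) = c(w_1^{k+1}|y_1^j) + c(w_1^{k+1}|X_1^{i-1}) + (\text{boundary terms})$, where the boundary terms count occurrences straddling the junction between $y_1^j$ and $X_1^{i-1}$ and are bounded by a constant depending only on $k$. The training contribution $c(\cdot|y_1^j)$ is a fixed constant (depending on $y_1^j$ and $k$ but not on $i$). The empirical contribution $c(w_1^{k+1}|X_1^{i-1})$ grows linearly in $i$ for every block $w_1^{k+1}$ with $Q(X_{k+1-k}^{k+1}=w_1^{k+1})>0$, again by the ergodic theorem. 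I would therefore show, by dividing numerator and denominator of (\ref{kBayesianIIAdapt}) by $i$ and letting $i\to\infty$, that $B(X_i|X_1^{i-1},k) - Q(X_i|X_{i-k-1}^{i-1}) \to 0$ $Q$-almost surely (the inductive smoothing term $B(X_i|X_1^{i-1},k-1)$ is $O(1)$ and likewise vanishes after division by $i$). From this Cesàro-type convergence the displayed $\limsup$ bound follows by the same two applications of the ergodic theorem as in the proof of Theorem \ref{theoUniversal}.

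The main obstacle — really the only nontrivial point — is handling the blocks $w_1^{k+1}$ that have $Q$-probability zero but nonzero count in the training data $y_1^j$: for such blocks the ratio $c(w_1^{k+1}|y_1^jX_1^{i-1})/i$ tends to $0$, yet one must still argue that $X_i$ $Q$-almost surely never sits at the end of such a forbidden block, so that these terms simply do not arise along the trajectory; this is immediate since $Q$-almost every realization avoids cylinders of probability zero. A secondary bookkeeping point is to verify that the finitely many boundary straddling terms and the fixed additive constants $c(\cdot|y_1^j)$ and $B(x_{n+1}|x_1^n,k-1)$ are all $O(1)$ uniformly in $i$, which is routine. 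Efficient computability is proved exactly as in Theorem \ref{theoComputable}, storing and incrementing substring frequencies (now initialized from $y_1^j$) on line, so I would only remark on it rather than repeat the argument.
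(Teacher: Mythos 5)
Your proposal is correct and follows essentially the same route as the paper: invoke the analogue of Lemma \ref{theoDomination}, show via the ergodic theorem that $B(X_n|X_1^{n-1},k)-Q(X_n|X_{n-k-1}^{n-1})\to 0$ $Q$-a.s.\ because the training counts and boundary-straddling terms are fixed while the empirical counts grow linearly, and then repeat the Ces\`aro argument from Theorem \ref{theoUniversal}. The paper's own proof is terser, simply asserting the ergodic-theorem convergence and referring back to Theorem \ref{theoUniversal}; you have usefully spelled out why the training data's contribution washes out, including the $Q$-null cylinder caveat, but this is the same argument, not a different one.
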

\begin{proof}
  Analogously to the plain switch distribution, the preadapted switch
  distribution satisfies the analogue of  \ref{theoDomination}.
  Having this fact in mind, we can prove the universality.  Let $Q$ be
  a stationary ergodic distribution.  Since the alphabet of $X_i$ is
  finite, by the ergodic theorem differences
  $B(X_{n}|X_1^{n-1},k)-Q(X_{n}|X_{n-k-1}^{n-1})$ converge to $0$
  $Q$-almost surely. The further reasoning proceeds like the proof of
  Theorem \ref{theoUniversal}.
\end{proof}

\begin{theorem}
  \label{theoComputableAdapt}
  The value of the preadapted switch distribution $P(x_1^n)$ can be
  computed in time $O((j+n)s)$ where $s=\mathbf{L}(y_1^jx_1^n)$.
\end{theorem}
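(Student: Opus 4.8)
The plan is to mirror the argument of Theorem \ref{theoComputable}, tracking carefully how the substring frequencies are maintained when the training data $y_1^j$ is prepended to the adaptive window. First I would observe that the recursion (\ref{SwitchI})--(\ref{TotalSwitch}) defining the preadapted switch distribution has exactly the same shape as in the plain case: the only difference is that each call $B(x_{l+1}|x_1^l,k)$ now queries, via (\ref{kBayesianIIAdapt}), the frequency counts $c(w|y_1^jx_1^l)$ of substrings in the concatenated string $y_1^jx_1^l$ rather than in $x_1^l$ alone. So the combinatorial skeleton is unchanged; what must be re-examined is (a) the cost of an individual call of $B$, and (b) how many orders $k$ actually need to be kept.

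For (a), I would argue that an individual call of $B(x_{l+1}|x_1^l,k)$ can again be reduced to constant amortized time by storing the relevant counts and incrementing them on line. The subtlety is that the counts now live over $y_1^jx_1^n$, so the data structure must be preloaded with the substring statistics of the training data $y_1^j$ before the adaptive phase begins. Building the suffix tree of $y_1^jx_1^n$ in time $O(j+n)$ via \cite{Ukkonen95} gives us both the initialization of all needed counts (frequencies of substrings appearing at least twice, together with their one-symbol extensions) and, as a byproduct, the value $s=\mathbf{L}(y_1^jx_1^n)$. During the adaptive phase we process $x_1^n$, performing the increments for the $O(n)$ new positions; each position touches $O(s)$ orders, and each touch is $O(1)$ amortized.

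For (b), the key identity is that $B(x_{l+1}|x_1^l,k)=B(x_{l+1}|x_1^l,s)$ for all $k>s$, because the relevant substring $x_{l+1-k}^{l+1}$ of length $>s$ does not appear twice in $y_1^jx_1^n$ and hence its count is $0$, so (\ref{kBayesianIIAdapt}) collapses to the lower-order value; this is exactly the mechanism used in Theorem \ref{theoComputable}, now applied with depth measured over the concatenation. Consequently the same flushing trick applies: all masses $P(x_1^l,k)$ with $k>s$ can be merged into a single dummy variable $P(x_1^l,\bullet)$, and the recursion becomes the analogue of (\ref{SwitchIIIA})--(\ref{TotalSwitchA}) with at most $s+2$ tracked components at each step. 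Running the recursion over the $n$ positions of $x_1^n$, at $O(s)$ work per position, gives the recursion cost $O(ns)$; adding the one-time suffix-tree construction and count initialization cost $O(j+n)$, the total is $O((j+n)s)$ as claimed.

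The main obstacle I anticipate is bookkeeping rather than conceptual: one must check that the count updates during the adaptive phase are consistent with the ``$|y_1^jx_1^{n-1}|$ in the denominator, $|y_1^jx_1^n|$ in the numerator'' convention of (\ref{kBayesianIIAdapt}), i.e.\ that the on-line incrementing correctly reproduces $c(x_{l+1-k}^{l+1}|y_1^jx_1^l)$ and $c(x_{l+1-k}^{l}|y_1^jx_1^{l-1})$ at step $l$, and that substrings which are unique in $y_1^j$ but become repeated once part of $x_1^n$ is appended are still handled (this is why the depth is taken over the full concatenation $y_1^jx_1^n$ and not over $y_1^j$ alone). Once this is verified, everything else is a transcription of the plain-case argument.
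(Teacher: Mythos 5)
Your proof is essentially correct and follows the same strategy as the paper's: reduce each call of $B$ to amortized $O(1)$ by maintaining counts incrementally, observe that $B(x_{l+1}|x_1^l,k)=B(x_{l+1}|x_1^l,s)$ for $k>s$ with $s=\mathbf{L}(y_1^jx_1^n)$, flush the tail mass into a dummy variable, and run the recursion (\ref{SwitchIIIA})--(\ref{TotalSwitchA}) at $O(s)$ work per position of $x$.

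One small bookkeeping point is slightly off, though it does not change the final bound. You attribute the initialization of the counts to the $O(j+n)$ suffix-tree construction of $y_1^jx_1^n$. But the suffix tree of the full concatenation gives occurrence counts in $y_1^jx_1^n$ as a whole, whereas what the adaptive recursion needs at step $l$ is $c(w\mid y_1^jx_1^{l})$ for the running prefix, so the dynamic counter must be seeded with $c(w\mid y_1^j)$ and then incremented as $x$ is consumed. Seeding those counters from $y_1^j$ is a separate pass over the training data touching $O(s)$ substrings per position, which is what the paper prices at $O(js)$ (the suffix tree of $y_1^jx_1^n$ is still the right tool to obtain the depth $s$ itself, just not the seed counts). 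Since $O(js)+O(ns)=O((j+n)s)$, the theorem's bound is unaffected either way, but the paper's accounting of the initialization cost is the more careful one.
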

\begin{proof}
  The complexity of an individual call of $B(x_{l+1}|x_1^l,k)$
  can be reduced to a~constant if we record the frequencies of
  substrings tested in formula (\ref{kBayesianIIAdapt}) and we
  increment them on line. Initializing these frequencies takes time
  $O(js)$.  Let us also observe that
\begin{align}
  B(x_{l+1}|x_1^l,k)=B(x_{l+1}|x_1^l,s)
\end{align}
holds for all $k>s$.  Thus without affecting the value of $P(x_1^n)$,
the algorithm (\ref{SwitchIII})--(\ref{SwitchIV}) can be changed to
(\ref{SwitchIIIA})--(\ref{SwitchIVB}) and the formula for the total
probability becomes (\ref{TotalSwitchA}).  Thus the time complexity of
$P(x_1^n)$ is of order $O((j+n)s)$.
\end{proof}

The space complexity of the preadapted switch distribution can also be
reduced by noticing that in order to compute $B(x_{l+1}|x_1^l,k)$ we only
have to record the frequencies of substrings $w$ that appear in
$y_1^jx_1^n$ at least twice and the frequencies of their extensions
$wa$, where $a\in\klam{1,2,...,D}$.

\section{Measuring codewise Hilberg exponent $\gamma$}
\label{secExperiments}

Here we describe a~simple experiment that we have performed using the
three switch distributions and the Lempel-Ziv code.  As the training
data we have taken \emph{The Complete Memoirs} by J.\ Casanova
(6,719,801 characters), and as the compressed text---\emph{Gulliver's
  Travels} by J.\ Swift (579,438 characters). Both texts were
downloaded from the Project
Gutenberg.\footnote{\url{http://www.gutenberg.org/}} The alphabet size
was set as $D=256$. The switch distributions were computed using
transition probabilities $p_n$ of form (\ref{pn}) with $\alpha=1.001$
since we observed that the lower the $\alpha$ is the better
compression is achieved. Moreover, we have used algorithm
(\ref{SwitchIIIA})--(\ref{TotalSwitchA}) with fixed depth $s=7$ since
more than $99.99\%$ of the probability mass in the observed cases
concentrated in $P(x_1^n,k)$ with $k\le 4$.  Hence it was a safe
approximation.  The Lempel-Ziv code was computed by our own
implementation for the ASCII encoding of the text. The results are
presented in Tables \ref{tabCompressionRates} and \ref{tabPMI} and
Figures \ref{figCompressionRates} and \ref{figPMI}.

\begin{figure}[p]
  \centering
\includegraphics[width=\textwidth]{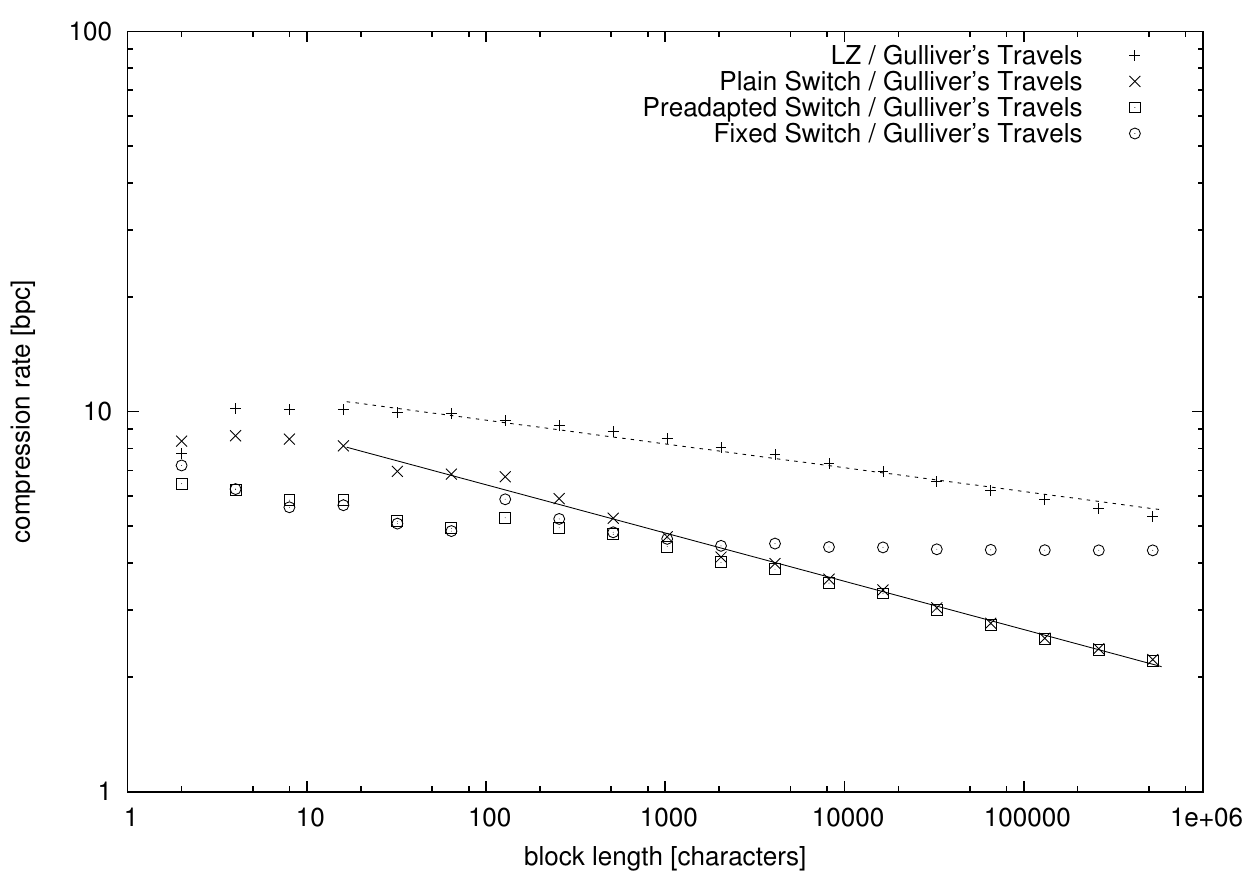}
\caption{\label{figCompressionRates} Compression rates for the switch
  distributions and the LZ code.  The solid line is the least square
  regression $y=11.51 n^{-0.127}$, computed for the plain switch
  distribution.  The dotted line is the least square regression
  $y=12.66 n^{-0.0625}$, computed for the LZ code. }
\end{figure}

\begin{figure}[p]
  \centering
\includegraphics[width=\textwidth]{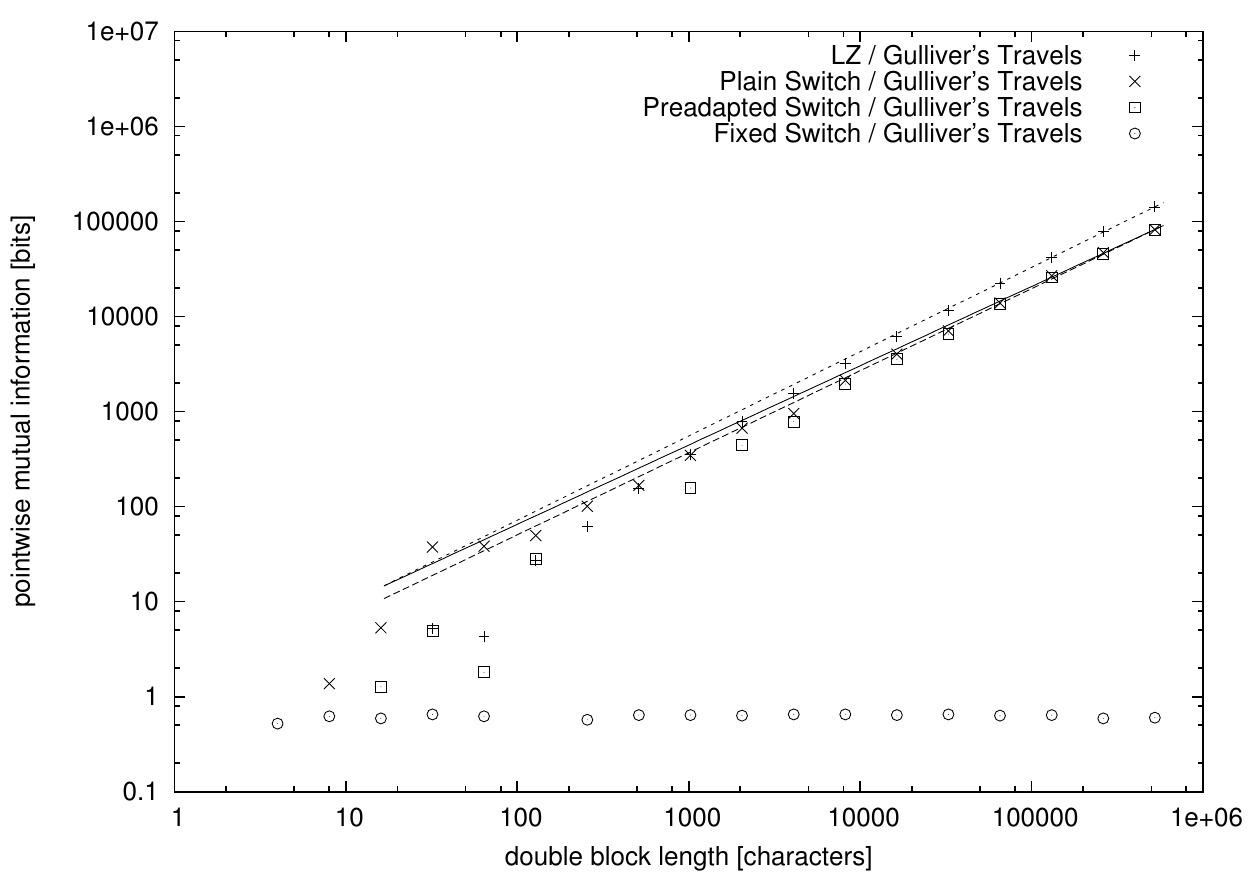}
\caption{\label{figPMI} Pointwise mutual information for the switch
  distributions and the LZ code.  The solid line is the least square
  regression $y=1.395 n^{0.834}$, computed for the plain switch
  distribution. The dashed line is the least square regression
  $y=0.946 n^{0.863}$, computed for the preadapted switch
  distribution. The dotted line is the least square regression
  $y=1.209 n^{0.887}$, computed for the LZ code.}
\end{figure}

\begin{table}
  \centering
\begin{tabular}{|l|l|l|l|l|}
\hline
&\multicolumn{4}{|c|}{$H^P(x_1^n)/n$ [bpc]}\\
\cline{2-5}
$n$&LZ&plain switch&preadapted switch&fixed switch\\
\hline
2       &7.7459  &8.3547  &6.4605  &7.2124  \\  
4       &10.2089 &8.6367  &6.2363  &6.2506  \\  
8       &10.1347 &8.4657  &5.847   &5.5963  \\   
16      &10.1122 &8.1227  &5.8676  &5.6687  \\ 
32      &9.9482  &6.9604  &5.1395  &5.0712  \\   
64      &9.8816  &6.8478  &4.9319  &4.8489  \\
128     &9.4894  &6.7355  &5.254   &5.8753  \\ 
256     &9.1817  &5.9023  &4.9481  &5.2167  \\ 
512     &8.8427  &5.2381  &4.7506  &4.8141  \\ 
1024    &8.5069  &4.6831  &4.414   &4.626   \\ 
2048    &8.0525  &4.1411  &4.0127  &4.4325  \\ 
4096    &7.7158  &3.9809  &3.8476  &4.4953  \\ 
8192    &7.3084  &3.6209  &3.5361  &4.4023  \\ 
16384   &6.9471  &3.3941  &3.3238  &4.3935  \\ 
32768   &6.5467  &3.0459  &3.0114  &4.3422  \\
65536   &6.1909  &2.7745  &2.7504  &4.327   \\
131072  &5.865   &2.5342  &2.5223  &4.3188  \\
262144  &5.5665  &2.3759  &2.3664  &4.3142  \\
524288  &5.2928  &2.2252  &2.213   &4.3126  \\
\hline
\end{tabular}
\caption{\label{tabCompressionRates} Compression rates 
  for the switch distributions and the LZ code.}
\end{table}

\begin{table}
  \centering
\begin{tabular}{|l|l|l|l|l|}
\hline
&\multicolumn{4}{|c|}{$I^P(x_1^{n/2};x_{n/2+1}^{n})$ [bits]}\\
\cline{2-5}
$n$&LZ&plain switch&preadapted switch&fixed switch\\
\hline
2       &-1.32      &-0.71          &-0.64           &-2.14\\  
4       &-6.04      &-1.13          &-0.67           &0.52\\  
8       &-3.99      &1.37           &-0.5            &0.62\\   
16      &-4.96      &5.3            &1.27            &0.59\\ 
32      &5.25       &37.5           &4.96            &0.65\\   
64      &4.26       &38.26          &1.8             &0.62\\
128     &27.27      &49.6           &28.22           &-2.47\\ 
256     &61.92      &100.78         &-3.89           &0.57\\ 
512     &155.1      &166.76         &-29.5           &0.64\\ 
1024    &353.89     &345.54         &156.62          &0.64\\ 
2048    &789.56     &668.01         &441.52          &0.63\\ 
4096    &1554.31    &954.54         &786.68          &0.65\\ 
8192    &3187.28    &2128.53        &1945.46         &0.65\\ 
16384   &6119.95    &4017.21        &3558.77         &0.64\\ 
32768   &11608.28   &7062.68        &6551.43         &0.65\\
65536   &22241.83   &13877.79       &13549.91        &0.63\\
131072  &41621.75   &26852.4        &25727.25        &0.64\\
262144  &78530.25   &47113.91       &45313.69        &0.59\\
524288  &142330.87  &81859.85       &81873.95        &0.6 \\
\hline
\end{tabular}
\caption{\label{tabPMI} Pointwise mutual information
  for the switch distributions and the LZ
  code.}
\end{table}

In Figure \ref{figCompressionRates} and Table
\ref{tabCompressionRates}, the quality of compression can be compared
for the particular distributions. Among the universal schemes, the
best compression is given by the preadapted switch distribution
followed by the plain switch distribution followed by the LZ code.
However, our hope that the preadapted switch distribution will
significantly beat the plain switch distribution has not been
fully confirmed. Indeed for short blocks the preadapted switch distribution
mimics the behavior of the fixed switch distribution and performs much
better than the plain switch distribution. Alas, for long blocks the
difference between the two universal switch distributions becomes
negligible. Ultimately, both universal switch distributions compress
the text twice better than the LZ code. For no universal code we can
observe the ultimate stabilization of the compression rate. On the
other hand, the fixed switch distribution, which is not universal,
stabilizes ultimately at the constant rate of 4.31 bpc.

The stabilization of the fixed switch distribution is clearly visible
in Figure \ref{figPMI} and Table \ref{tabPMI}, which concern the
pointwise mutual information. Namely, we can see that pointwise mutual
information for the fixed switch distribution does not grow, whereas
for the other distributions, which are universal, the pointwise mutual
information grows rather fast. The tightest bound for the pointwise
mutual information is obtained in the case of the plain switch
distribution, which gives the exponent $\gamma= 0.83$ for the codewise
Hilberg conjecture (\ref{HilbergP}). It is surprising that the
pointwise mutual information for the two other universal distributions
grows almost at the same rate, despite the large difference of
compression rates between the universal switch distributions and the
LZ code. 

As we have indicated in the Introduction, observing $\gamma> 0$ for
the switch distribution, we have to reject the hypothesis that the
source is a Markov chain. Possibly, Hilberg's conjecture
(\ref{Hilberg}) may be true but we need still some stronger evidence
for this hypothesis such as a lower bound for the exponent $\beta$.

\section*{Acknowledgment}

The author wishes to thank Jacek Koronacki and Jan Mielniczuk for a
discussion and an anonymous referee for suggesting some relevant
references.

\bibliographystyle{IEEEtran}

\bibliography{0-journals-abbrv,0-publishers-abbrv,ai,mine,tcs,nlp,ql,books}

\end{document}